\author{Ali Mashreghi \footnote{Department of Computer Science, University of Victoria, BC, Canada. Email: ali.mashreghi87@gmail.com} \and Valerie King \footnote{Department of Computer Science, University of Victoria, BC, Canada. Email: val@uvic.ca, Funded with an NSERC grant.} }
\newcommand{\old}[1]{}
\newtheorem{theorem}{Theorem}[section]
\newtheorem{lemma}[theorem]{Lemma}
\newtheorem{observation}[theorem]{Observation}
\newenvironment{proof}{{\bf Proof:\ }}{\hfill$\Box$\medskip}
\date{}
\begin{document}
\title{Faster asynchronous MST and low diameter tree construction with sublinear communication} 
\maketitle

\begin{abstract}
Building a spanning tree, minimum spanning tree (MST), and BFS tree in a distributed network are fundamental problems which are still not fully understood in terms of time and communication cost. x The first work to succeed in computing a spanning tree with communication sublinear in the number of edges in an asynchronous CONGEST network appeared in DISC 2018. That algorithm which constructs an MST is sequential in the worst case; its running time is proportional to the total number of messages sent.  Our paper matches its message complexity but brings the running time down to linear in $n$.  Our techniques can also be used to provide an asynchronous algorithm with sublinear communication to construct a tree in which the distance from a source to each node is within an additive term of $\sqrt{n}$ of its actual distance. 

We can convert any asynchronous MST algorithm with time $T(n, m)$ and message complexity of $M(n, m)$ to an algorithm with time $O(n^{1 - 2\epsilon} + T(n, n^{3/2 + \epsilon}))$ and message complexity of $\tilde{O}(n^{3/2 + \epsilon} + M(n, n^{3/2+\epsilon}))$, for $\epsilon \in [0, 1/4]$. Picking $\epsilon = 0$ and using Awerbuch's algorithm \cite{awerbuch1987optimal}, this results in an MST algorithm with time $O(n)$ and message complexity $\tilde{O}(n^{3/2})$. 
However, if there were an asynchronous MST algorithm that takes time sublinear in $n$ and requires messages linear in $m$, by picking $\epsilon > 0$ we could achieve sublinear time (in $n$) and sublinear communication (in $m$), simultaneously. To the best of our knowledge, there is no such algorithm.

All the algorithms presented here are Monte Carlo and succeed with high probability, in the KT1 CONGEST asynchronous model.  

\end{abstract}

\newpage 

\section{Introduction}
A distributed network of processes can be represented as an undirected graph $G=(V, E)$, where $|V| = n$ and $|E| = m$. Each node corresponds to a  process and each edge corresponds to a communication link between the two processes. The nodes can communicate only by passing messages to each other. Computing the spanning tree and the minimum spanning tree (MST) are problems of fundamental importance in distributed computing.\old{ A spanning tree allows for a source to broadcast with $n-1$ messages. In a weighted graph where the weights of the edges reflect the cost of sending messages from one node to another, the MST is the most cost-effective network that can be used for communication. }
Efficient solutions for building these trees directly improve the solution to other distributed computing problems or at least provide valuable insights. Leader election, counting, and shortest path tree are examples of such problems. The breadth-first search tree (BFS)  is also important; it can be used to simulate a synchronous algorithm in an asynchronous network. 

The problem of constructing an MST in a distributed network has been studied for many years. In the earlier works, researchers focused on improving the time complexity since it was believed that any spanning tree algorithm in the CONGEST model would require $\Omega(m)$ messages (See \cite{awerbuch1990trade}). \old{The recent algorithms for computing the MST have time complexity close to the proved lower bounds.} After the algorithm of King et al. \cite{king2015construction}  which constructs the MST in the synchronous CONGEST model in $\tilde{O}(n)$ time and messages, there has been renewed interest in message complexity. 
Mashreghi and King \cite{mashreghi_disc_2018} achieved the first algorithm to compute a spanning tree in an asynchronous CONGEST model with $o(m)$ communication complexity when $m$ is sufficiently large. However,  the time complexity of their algorithm matches its communication complexity,  $\tilde{O}(n^{3/2})$; in the worst case, their algorithm essentially operates in a sequential manner. 
 
In the present work, we 
 match the message complexity of \cite{mashreghi_disc_2018} but bring the running time down to $O(n)$, a time which matches the time of the fastest known asynchronous MST algorithms that use $\Theta(m)$ communication  \cite{awerbuch1987optimal}.
 
 The classic Layered BFS algorithm for the asynchronous CONGEST network uses $O(D^2)$ time and $O(Dn+m)$ messages, where $D$ is the diameter of the network. We show how to construct a ``nearly BFS"  tree  in this model with sublinear number of messages (for small enough $D$, and sufficiently large $m$) such that for each node, the distance from the source node is within an additive term of $O(\sqrt{n})$ from its actual distance in the network. Such a tree can be used to simulate a synchronous algorithm in an asynchronous network with an overhead of $O(D+\sqrt{n})$ time per step. To the best of our knowledge, there is no previously known algorithm to construct a low diameter tree using a sublinear number of messages in an asynchronous network.

\old{
 Table \ref{tableA} gives a summary of the MST algorithms that use $o(m)$ communication for $m$ sufficiently small. All algorithms in this table are Monte Carlo and work in the KT1 CONGEST model.
\begin{table}[ht]
\caption{MST algorithms with sublinear communication in the number of edges.} 
\label{tableA}
\centering 
\begin{tabular}{c c c c} 
\hline\hline 
 Authors & Synchrony & Time Complexity & Message Complexity \\ [0.5ex] 
\hline 
King, Kutten, and Thorup \cite{king2015construction} & Synchronous & $\tilde{O}(n)$ & $\tilde{O}(n)$ \\
Mashreghi and King \cite{mashreghi2017time} & Synchronous & $O(\frac{n}{\epsilon})$ & $\tilde{O}(\frac{n^{1+\epsilon}}{\epsilon})$ \\
Mashreghi and King \cite{mashreghi_disc_2018} & Asynchronous & $\tilde{O}(n^{3/2})$ & $\tilde{O}(\min \{m, n^{3/2}\})$\\
Ghaffari and Kuhn \cite{ghaffari2018distributed} & Synchronous & $\tilde{O}(D + \sqrt{n})$ & $\tilde{O}(\min \{m, n^{3/2}\})$ \\
Present work & Asynchronous & $O(n)$ & $\tilde{O}(\min \{m, n^{3/2}\})$ \\ [1ex] 
\hline 
\end{tabular}
\label{table:nonlin} 
\end{table}
} Specifically, we show: 
\begin{theorem}
\label{nongeneral}
There exists an asynchronous algorithm in the KT1 CONGEST model that, w.h.p. 
computes the MST in $O(n)$ time and with $O(\min\{m, n^{3/2} \log^2 n\})$ messages.
\end{theorem}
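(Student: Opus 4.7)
The plan is to instantiate the general reduction sketched in the abstract at $\epsilon = 0$, so the algorithm splits into two stages. In Stage~I, sparsify $G$ to a subgraph $G' \subseteq G$ of size $\tilde O(n^{3/2})$ that provably contains every MST edge of $G$. In Stage~II, run Awerbuch's \cite{awerbuch1987optimal} asynchronous MST algorithm on $G'$; it takes $O(n)$ time and $O(|E(G')|) = O(n^{3/2}\log^2 n)$ messages, and its output is the MST of $G$ because Stage~I preserved all MST edges. Summing the two stages yields the claimed bounds, provided Stage~I itself finishes in $O(n)$ time using $\tilde O(n^{3/2})$ messages. The $\min\{m,\cdot\}$ clause is handled by a preliminary test: if $m \le n^{3/2}\log^2 n$, skip sparsification and call Awerbuch directly on $G$.

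For Stage~I, I would follow the sampling template of Mashreghi--King \cite{mashreghi_disc_2018}. Shared randomness is essentially free in KT1, since every node knows its neighbours' IDs. Each node locally marks roughly $\sqrt n\,\mathrm{polylog}\,n$ incident edges to form a sampled subgraph $H$ of size $\tilde O(n^{3/2})$; with high probability the minimum spanning forest $F$ of $H$ partitions $V$ into fragments of internal diameter $\tilde O(\sqrt n)$. By the cycle property, every edge of $G \setminus H$ whose endpoints lie in the same fragment of $F$ is the heaviest edge on some cycle and hence is not in the MST. The sparsified graph $G' := F \cup \{\text{inter-fragment edges of } G\}$ therefore contains the MST, and a standard sampling argument bounds $|E(G')|$ by $\tilde O(n^{3/2})$.

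The main obstacle, and the only genuine improvement over \cite{mashreghi_disc_2018}, is executing Stage~I in $O(n)$ asynchronous time instead of $\tilde O(n^{3/2})$. The DISC 2018 procedure processed fragment merges essentially sequentially, so its running time matched its message count. To bring time down to $O(n)$, I would pipeline a Boruvka-style loop on $H$: each fragment convergecasts along its spanning tree of $H$ to discover its minimum outgoing sampled edge, then merges with a neighbour, so that the fragment count halves each phase. Because fragment diameters are $\tilde O(\sqrt n)$ and only $O(\log n)$ phases are needed, the in-$H$ work totals $\tilde O(\sqrt n)$ time per fragment. Coordinating these phases asynchronously without exceeding $\tilde O(n^{3/2})$ messages is the delicate step; a lightweight $\alpha$-style synchronizer restricted to $H$ should suffice, since $H$ already has $\tilde O(n^{3/2})$ edges and the synchronization overhead does not dominate. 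Composing this with Awerbuch's $O(n)$-time run on $G'$ then yields the theorem.
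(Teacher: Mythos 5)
Your Stage~II and the ``skip sparsification if $m$ is already small'' device are fine, and running Awerbuch on a sparse MST-preserving subgraph is exactly what the paper does in its last step. The problem is Stage~I: the sparsification you propose does not preserve the MST, and the diameter claim you lean on is unjustified.

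Concretely, you let $H$ be a random sample of edges, set $F$ to be the minimum spanning forest of $H$, and then assert that ``every edge of $G\setminus H$ whose endpoints lie in the same fragment of $F$ is the heaviest edge on some cycle and hence is not in the MST.'' That is false. The cycle property only lets you discard an edge $e$ if $e$ is heavier than \emph{every} edge on the $F$-path between its endpoints (i.e.\ $e$ is $F$-heavy). Since $e\notin H$, it was never compared against anything while $F$ was built, so an arbitrarily light MST edge of $G$ can fall inside a fragment of $F$ and would be thrown away by your rule. The Karger--Klein--Tarjan sampling lemma says only that the \emph{number} of $F$-light intra-fragment edges is small in expectation; it does not say that number is zero, and you would still have to identify and keep those $F$-light edges, which is itself a nontrivial communication task. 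A second gap: you claim ``w.h.p.\ the minimum spanning forest $F$ of $H$ partitions $V$ into fragments of internal diameter $\tilde O(\sqrt n)$,'' but an MSF of a random sample has no such diameter guarantee (and $H$ need not even be connected). A third, smaller issue: low-degree nodes simply do not have $\sqrt n\,\mathrm{polylog}\,n$ incident edges to mark.

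The paper avoids all of this by \emph{not} using random edge sampling. It partitions the vertices by degree: low-degree nodes (degree $<\sqrt n\log^2 n$) contribute all of their incident edges to the sparsifier, and the sparsifier is $S_{\min}=F_{\min}\cup\{\text{edges with a low-degree endpoint}\}$, where $F_{\min}$ is the \emph{exact} minimum spanning forest of $G'$, the subgraph induced on high-degree and randomly self-selected ``star'' nodes. Because $F_{\min}$ is exact, the cut/cycle argument in Lemma~\ref{lemmak4} is watertight: an MST edge with a low-degree endpoint is kept outright, and an MST edge internal to $G'$ must be in $F_{\min}$. The $O(\sqrt n)$-diameter forest $F$ on $G'$ is obtained not by sampling but by the star/high-degree construction (Observation~\ref{o:height}), which directly bounds the number of stars and hence the tree heights; this low diameter is then what lets the paper simulate the Ghaffari--Kuhn synchronous MSF algorithm phase by phase in $\tilde O(\sqrt n)$ time (Lemma~\ref{lemmak3}). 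You would need to replace your sampling-based Stage~I by something with these properties (an exact MSF of a structured, low-diameter subgraph) before the cycle-property step and the phase-pipelining argument go through.
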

This result achieves communication sublinear in $m$ when $m$ is sufficiently large, and is optimal for time when the diameter is $\Theta(n)$. We also prove the following more general theorem.

\begin{theorem}
\label{theogen}
Given an asynchronous MST algorithm with time $T(n, m)$ and message complexity of $M(n, m)$ in the KT1 CONGEST model, w.h.p., the MST in an asynchronous network can be constructed in $O(n^{1 - 2\epsilon} + T(n, n^{3/2 + \epsilon}))$ time and $\tilde{O}(n^{3/2 + \epsilon} + M(n, n^{3/2+\epsilon}))$ messages, for $\epsilon \in [0, 1/4]$.
\end{theorem}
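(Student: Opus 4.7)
My plan is to prove Theorem~\ref{theogen} by parameterizing the sampling rate in the algorithm underlying Theorem~\ref{nongeneral}. In Step~1, every node samples $\Theta(n^{1/2+\epsilon}\log n)$ of its incident edges uniformly at random; since we are in the KT1 model this is done locally with no communication, and the resulting subgraph $G'$ has $\tilde O(n^{3/2+\epsilon})$ edges. A cut-sampling argument of the same form as the one used for $\epsilon=0$ shows that $G'$ retains enough edges across every sufficiently sparse cut to let us recover $\mathrm{MST}(G)$ with high probability.

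In Step~2, I would run the same Boruvka-style contraction/certificate-construction subroutine that underlies Theorem~\ref{nongeneral}, but on $G'$ rather than on the sparser $\tilde O(n^{3/2})$ sample used there. Because each node's sample is denser by a factor of $n^\epsilon$, each surviving fragment can locate its minimum-weight outgoing edge via a pipelined convergecast of depth $O(n^{1/2-\epsilon})$; the $O(\log n)$ Boruvka phases then complete in $O(n^{1-2\epsilon})$ time and $\tilde O(n^{3/2+\epsilon})$ messages, producing a skeleton $H\subseteq G$ on $\tilde O(n^{3/2+\epsilon})$ edges with $\mathrm{MST}(H)=\mathrm{MST}(G)$. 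In Step~3, I would invoke the given asynchronous MST algorithm on $H$ as a black box; since $\mathrm{MST}(H)=\mathrm{MST}(G)$, this outputs the true MST of $G$ while contributing $T(n,n^{3/2+\epsilon})$ additional time and $M(n,n^{3/2+\epsilon})$ additional messages. Summing the three stages yields the theorem's bounds.

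The main obstacle is Step~2 under asynchrony: one must show that the MWOE convergecasts inside each fragment can be pipelined cleanly despite arbitrary message delays, and that the total communication over $O(\log n)$ phases stays within $\tilde O(n^{3/2+\epsilon})$ even with the denser sample. Care must also be taken that successive merges do not grow fragment trees in a way that defeats the $O(n^{1/2-\epsilon})$ per-phase depth bound. The range restriction $\epsilon\le 1/4$ should fall out of this analysis, reflecting that below a per-phase convergecast depth of $\Theta(\sqrt{n})$ the sampling density needed to pipeline faster would exceed the target $\tilde O(n^{3/2+\epsilon})$ message budget.
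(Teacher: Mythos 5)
The proposal misremembers how the algorithm behind Theorem~\ref{nongeneral} obtains a sparse subgraph, and the substitute mechanism you propose is not sound. In the paper, $G'$ is \emph{not} formed by having each node sample $\Theta(n^{1/2}\log n)$ incident edges; it is the subgraph induced on the set of high-degree nodes and randomly self-selected star nodes, and the set of edges handed to the final black-box MST algorithm, $S_{\min}$, consists of the minimum spanning forest $F_{\min}$ of $G'$ together with every edge incident to at least one low-degree node. The reason $S_{\min}$ contains all MST edges is the cycle property (Lemma~\ref{lemmak4}): any MST edge with two high-degree/star endpoints lies in $G'$, and if it were absent from $F_{\min}$ one could swap it in and drop a heavier cycle edge, contradicting minimality of $F_{\min}$. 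Your replacement --- independent random edge sampling plus ``a cut-sampling argument'' --- does not preserve the MST: a single bridge across a cut whose endpoints both have degree $\gg n^{1/2+\epsilon}$ would be missed with constant probability, and no cut-sparsifier guarantee recovers it, since sparsifiers preserve cut \emph{values} only approximately, not the exact minimum-weight crossing edge.

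Because Step~1 is wrong, Steps~2 and~3 inherit the problem, and the resource accounting is also misattributed. The paper's $O(n^{1-2\epsilon})$ time for constructing $F$ comes from tuning the star-selection probability so that there are $O(n^{1/2-\epsilon})$ star nodes: the height-one initial fragments are joined by $O(n^{1/2-\epsilon})$ merges, each taking time proportional to the $O(n^{1/2-\epsilon})$ tree diameter, not from ``denser samples enabling shallower pipelined convergecasts.'' Correspondingly the low/high-degree threshold is raised to $n^{1/2+\epsilon}$ so every high-degree node still has a star neighbor w.h.p., which is what makes $S_{\min}$ have $\tilde O(n^{3/2+\epsilon})$ edges. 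Finally, your explanation of the constraint $\epsilon\le 1/4$ is also off: it arises because part~(2) (computing $F_{\min}$ via the simulated synchronous algorithm of~\cite{ghaffari2018distributed}) inherently costs $\tilde O(\sqrt n)$ time, so driving $n^{1-2\epsilon}$ below $\sqrt n$ buys nothing. To repair the proposal you should replace the edge-sampling sparsifier with the paper's node-based construction of $G'$, $F$, $F_{\min}$, and $S_{\min}$, parameterized by $\epsilon$ in the star probability and the degree threshold, and then invoke the black-box algorithm on $S_{\min}$.
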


For the BFS problem we show: 
\begin{theorem} \label{t:bfs}
In an asynchronous KT1 CONGEST model, a network with diameter $D$ can construct a spanning tree with $O(D + \sqrt{n})$ diameter using time $O(D^2 +n)$ and messages $\tilde{O}(n^{3/2}+nD)$.
\end{theorem}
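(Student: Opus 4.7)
The plan is to construct, with sublinear communication, a sparse subgraph $H\subseteq G$ of only $\tilde{O}(n^{3/2})$ edges whose diameter is at most $D + O(\sqrt{n})$, and then run the classical asynchronous Layered BFS algorithm from the source on $H$ alone. On $H$ the standard Layered BFS analysis gives $O(D_H^2) = O(D^2 + n)$ time and $O(|E(H)| + nD_H) = \tilde{O}(n^{3/2} + nD)$ messages, and the resulting tree has depth $D_H \le D + O(\sqrt{n})$, matching the theorem.

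To build $H$ I would first apply Theorem~\ref{nongeneral} to construct a spanning tree $T$ of $G$ rooted at the source in $O(n)$ time and $\tilde{O}(n^{3/2})$ messages; $T$ will act as a communication backbone. Using $T$, the source broadcasts a uniformly random hub set $S \subseteq V$ of size $\Theta(\sqrt{n}\log n)$, at cost $O(n)$ time and $\tilde{O}(n^{3/2})$ messages. In the KT1 model each node now knows which of its neighbours lie in $S$, and a standard Chernoff argument shows that w.h.p.\ every vertex of degree at least $\sqrt{n}$ has a neighbour in $S$. The edge set of $H$ then consists of (i)~every edge incident to a vertex of degree at most $\sqrt{n}$ (contributing at most $n^{3/2}$ edges), (ii)~for every vertex of higher degree, one edge to a designated $S$-neighbour, and (iii)~all the edges of $T$ together with a sparse hub skeleton connecting the nodes of $S$ that is built over the edges already accounted for. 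Each endpoint decides $H$-membership locally, so no additional communication is needed.

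The main obstacle is verifying that $H$ is an additive $O(\sqrt{n})$-spanner of $G$ under this sparsification. The classical Aingworth / Dor--Halperin--Zwick construction attaches a full BFS tree rooted at every hub, which is too expensive to realise distributively within our message budget. Instead, I would proceed by a shortest-path argument: given a shortest path $u = v_0, v_1, \dots, v_k = v$ in $G$, each edge $(v_i, v_{i+1})$ missing from $H$ has two high-degree endpoints and can be bypassed by a short detour through an $S$-neighbour via the hub skeleton. By charging consecutive detours against the $\tilde{O}(\sqrt{n})$ hubs and arguing that each hub can serve only a bounded number of such detours along any single shortest path, one shows that the accumulated detour length is $O(\sqrt{n})$, giving $d_H(u,v) \le d_G(u,v) + O(\sqrt{n})$ and hence $D_H \le D + O(\sqrt{n})$. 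Making this amortisation compatible with the purely local membership test for $H$, and with the asynchronous scheduling that Layered BFS already has to cope with, is the delicate point.
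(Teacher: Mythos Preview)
Your high-level plan---build a sparse subgraph of diameter $D+O(\sqrt{n})$ and run Layered BFS on it---is exactly what the paper does. The difference, and the gap in your proposal, is in how the sparse subgraph is obtained.

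The paper does \emph{not} invoke Theorem~\ref{nongeneral}. It uses only the {\sc MaximalTree} procedure of Section~\ref{s:maximaltree}, which already yields a spanning forest $F$ on the subgraph $G'$ induced by high-degree and star nodes, with the crucial property (Observation~\ref{o:height}) that the \emph{sum} of the diameters of the trees in $F$ is $O(\sqrt{n}/\log n)$. Setting $G_{sparse}=F\cup\{\text{edges with a low-degree endpoint}\}$, the additive-stretch bound then follows directly from the argument in \cite{ghaffari2018distributed}: along any shortest $G$-path, every maximal run of high--high edges lies inside a single tree of $F$ and can be replaced by a path of length at most the diameter of that tree; summing over all such runs costs at most the total diameter of $F$, i.e.\ $O(\sqrt{n})$. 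No separate ``hub skeleton'' or amortised detour argument is needed.

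Your construction, by contrast, produces an MST $T$ with no diameter guarantee, and then posits a ``sparse hub skeleton connecting the nodes of $S$'' without saying what it is or why paths through it are short. The Aingworth/DHZ-style charging you sketch does not go through without such a skeleton: knowing that each high-degree vertex has an $S$-neighbour gives you one-hop access to hubs, but you still need short hub-to-hub paths inside $H$, and nothing in your $H$ provides them. This is precisely what the low-diameter forest $F$ supplies for free. A secondary issue is your claim that ``each endpoint decides $H$-membership locally'': a high-degree node does not know which of its neighbours are low-degree in the KT1 model, so it cannot restrict its Layered-BFS messages to $G_{sparse}$ without further work. The paper handles this inside {\sc MaximalTree} as well (Lemma~\ref{lemmak2}): by termination every high-degree node has received $\langle\textit{Low-degree}\rangle$ over all its outgoing low-degree edges.
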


\old{
\noindent
{\it Techniques:}
 A node is low-degree if its degree is $\tilde{O}(\sqrt{n})$, and high-degree otherwise. A set of $O(\sqrt{n})$ star nodes are self-selected randomly in the graph. We define a subgraph $G'$ as the subgraph on $G$ induced by all high-degree and star nodes. We first construct a spanning forest  $F$ on $G'$, where the sum of the diameter of all trees in this forest is $O(\sqrt{n})$. We do this in $O(n)$ time and with $\tilde{O}(n^{3/2})$ messages. Our new algorithm {\sc maximaltree} for this incorporates the asynchronous waiting technique of \cite{mashreghi_disc_2018}, to find the edges in $F$ and enable each high-degree and star node to learn its low-degree neighbors.

 As in \cite{ghaffari2018distributed} and \cite{gmyr2018}, we use the fact that the subgraph $G_{sparse}$ consisting of the edges in $F$  and all edges incident to at least one low-degree endpoint is a low diameter sparse spanner. 
To construct the nearly BFS tree, the network runs Gallager's Layered BFS algorithm \cite{gallager82} on $G_{sparse}$.\\ 
To construct the MST,  $F$ is used to find, in time linear in $n$, a minimum spanning forest $F_{min}$ of $G'$ and then the MST of the sparse subgraph consisting of the edges in $F_{min} $ and the edges incident to at least one low-degree node is found.  Applying Awerbuch's algorithm \cite{awerbuch1987optimal} to this subgraph results in an MST algorithm with time $O(n)$ and message complexity $\tilde{O}(n^{3/2})$.
}
\noindent
{\it Related work on MST:}
\textit{Asynchronous Model:} The first breakthrough for computing the MST was by Gallager, Humblet, and Spira \cite{gallager1983distributed}, who designed an asynchronous algorithm (known as GHS) which achieved $O(n \log n)$ time and $O(m + n \log n)$ message complexity. The algorithm was in the CONGEST model which allows for messages of size $O(\log n)$ bits. Later work gradually improved the time complexity of asynchronous MST computation to linear in $n$ \cite{chin1985almost, gafni1985improvements, awerbuch1987optimal, singh1995highly, faloutsos1995optimal, faloutsos2004linear}. The message complexity of the aforementioned papers is $O(m + n \log n)$. 

\textit{Synchronous Model:} Garay et al.\cite{garay1998sublinear} were the first to give a sublinear time $O(D + n^{0.614})$-round MST algorithm in the synchronous model. Kutten and Peleg \cite{kutten1995fast} gave an algorithm with  $O(D + \sqrt{n}\log^* n)$ time complexity, and Elkin \cite{elkin2004faster} provided an algorithm with $\tilde{O}(\mu (G, w) + \sqrt{n})$ round complexity. In his algorithm, $\mu (G, w)$ is the \emph{MST-radius} of the network. Although these algorithms can be simulated in an asynchronous network using a synchronizer \cite{awerbuch1985complexity, awerbuch1990network, awerbuch1993time, aggarwal1993time, kutten1997time, kutten1998asynchronous, burman2007time, awerbuch2007time}, either the superlinear time for initializing the synchronizer or their significant message overhead makes them unusable for our purposes.

For the KT0 model (also known as clean network) where nodes initially know only their own IDs, lower bounds of $\tilde{\Omega}(D + \sqrt{n})$ on the time complexity and lower bounds of $\Omega(m)$ on the message complexity have been proven \cite{elkin2006unconditional, sarma2012distributed, Peleg:2000:NLB:586844.586936, kutten2015complexity, awerbuch1987message}. There are algorithms that match both lower bounds simultaneously up to a polylogarithmic factor  \cite{pandurangan2016time, elkin2017simple}. 

In the KT1 model,  where nodes initially know the ID of their neighbors and in the presence of randomization, King et al. \cite{king2015construction}, provided an algorithm with $\tilde{O}(n)$ time and message complexity, which was the first algorithm that obtained sublinear message complexity in terms of the number of edges in the network. The time of their algorithm was later improved to linear in $n$ \cite{mashreghi2017time}. 
In 2018,  Ghaffari and Kuhn \cite{ghaffari2018distributed} and Gmyr and Pandurangan \cite{gmyr2018} examined time-message trade-offs in the synchronous CONGEST network.  Both these papers show how to build a low diameter sparse subgraph in the synchronous model. An $\Omega(m)$ lower bound for a restricted version of the synchronous KT1 model was shown in \cite{awerbuch1990trade}.

\noindent
{\it Organization:} Section \ref{s:model} defines the model,  Section \ref{s:algA} describes the algorithmic approach,  Section \ref{s:algs} gives the details of the algorithms, Section \ref{sublinearsection} discusses how our results can be generalized to obtain a trade-off between time and message complexity, and the Appendix gives a subroutine from \cite{ghaffari2018distributed}.
Finally, in Section \ref{conclusions}, we conclude the discussion.

\section{Model} \label{s:model}
\label{modelsec}
 We consider the asynchronous CONGEST model. In an asynchronous network, messages sent by nodes are delivered with arbitrary delays. Such communication is event-driven, where actions are taken upon receiving a message or waking up. We assume that all nodes wake up at the start. 
Time complexity in the asynchronous communication model is the worst-case execution time, if each message takes at most one time unit to deliver across one edge. The time for computations within a node is not considered.

 All nodes have knowledge of $n$, the size of the network, within a constant factor. In the CONGEST model, each message has $O(\log n)$ bits. ID's are unique and are taken from the range of $[1, poly(n)]$. We assume that messages to a receiver are numbered by a sender in the order in which they are sent to it, so that a node receiving the message can wait for the message from a sender with the sender's next number before acting.
W.l.o.g., we assume that the edge weights are unique and therefore, the MST is unique. If the edge weights are not unique they can be made unique while preserving the ordering of the weights, in the standard way (by multiplying the weight by $2^{2|ID|}$ and adding $x*2^{|ID|} + y $ where $x$ and $y$ are the uniquely labelled endpoints, $|ID|$ is the maximum length of an ID, and $x > y$). 

Nodes initially know their own ID and the ID of their neighbors. This is known as the KT1 model and is considered by some to be the standard model of distributed computing \cite{peleg2000distributed}. In weighted graphs,  initially, nodes only know the weight of their incident edges in the input graph G. We assume that all nodes wake up at the same time. In a distributed network problem for constructing a subgraph like MST,  the objective is that, upon termination of the protocol, all nodes must know which of their incident edges belong to the subgraph.  We explain the algorithm assuming that the input graph is connected. However, the protocol directly applies to disconnected graphs as well.  \\[-.35in]

\section{Algorithmic approach} \label{s:algA}
A common approach in distributed protocols for computing the MST is using {\it Boruvka's algorithm}. (See \cite{nevsetvril2001otakar}.)
Boruvka's Algorithm runs in $O(\log n)$ phases. The idea is to maintain a subgraph of the minimum spanning forest during the algorithm. Initially, each node is a tree in the spanning forest. In each phase, each tree (also called a fragment of the MST) computes a minimum weight edge leaving the fragment. Then, fragments are merged using these minimum outgoing edges. Each fragment is rooted at a specific node called the leader. When some fragment B finds a minimum outgoing edge to a fragment A rooted at node $a$, it requests to merge with A. If A accepts the merge, B becomes a subtree of A, and the two fragments A and B, and the edge used for merging them become one fragment rooted at node $a$. The implementation of the algorithm in the synchronous model allows a constant fraction of fragments to merge together in each phase. Therefore, $O(\log n)$ phases suffice for all fragments to merge into one fragment, which will be the final MST. 

We assume that each fragment of the MST has an \textit{identity} which we refer to as the fragment ID. Fragment ID is the node ID of the fragment’s leader and can change for any fragment over the course of the algorithm. Every node in a fragment is aware of the fragment ID.

It is difficult to mimic the parallelism of the Boruvka algorithm in the asynchronous model. The GHS algorithm succeeds in doing so by maintaining a rank for each fragment. Fragments use their minimum weight outgoing edge to merge into fragments of equal or higher ranks. In this way a fragment doubles in size with each full search for a minimum weight outgoing edge. The GHS algorithm avoids incurring a later full search cost when the minimum weight outgoing edge is to a fragment of lower rank.  Without communicating across each edge, it seems impossible to simulate this aspect of the GHS algorithm.  Mashreghi and King \cite{mashreghi_disc_2018} drops the attempt to run a parallel Boruvka-style algorithm and instead grows its MST from a single source node. In doing so, it runs in worst case time proportional to the total number of messages. 

Here we accomplish some sort of parallelism by starting with ``initial" fragments of height 1, formed by high degree nodes and star nodes in parallel. A node is  {\it high-degree} if its degree is at least $\sqrt{n} \log^2 n$. Otherwise, it is a {\it low-degree} node.
A node selects itself to be a {\it star node} with probability of $\frac{c}{\sqrt{n} \log n}$ where $c$ is a constant dependent on $c'$ so that each high-degree node is adjacent to a star node with probability $1-1/n^{c'}$. 
 Let $G'$ be the subgraph on $G$ induced by all high-degree and star nodes.  We can construct a spanning forest  $F$ on $G'$, from the initial fragments by adding $O(\sqrt{n}/\log n)$ edges such that the sum of the diameter of all trees in this forest is $O(\sqrt{n}/\log n)$. Thus we can afford to add these edges sequentially in the worst case, spending a time per edge proportional to $\log^2 n*(\hbox{diameter of a maximum tree})$, for a total of $O(n)$ time. 

To find minimum outgoing edges, one approach is to have nodes test all of their incident edges, in the order of weight, to see whether an edge is outgoing. This results in $\Omega(m)$ messages. King et al. \cite{king2015construction} provided an \emph{asynchronous} subroutine, called FindAny, that with constant probability,  finds an edge leaving a fragment $T$ and uses only $\tilde{O}(|T|)$ messages, where $|T|$ is the number of nodes in $T$. This subroutine is used in \cite{ghaffari2018distributed} and \cite{mashreghi_disc_2018} to compute the MST. The following lemma on FindAny subroutine will be used throughout the algorithm.  $FindAny(E')$ means that FindAny is performing this search on a subset $E'$ of the edges. 

\begin{lemma}{\cite{king2015construction}}
\label{findAnyLemma}
With probability 1/16, FindAny succeeds and finds an outgoing edge from fragment $T$. Otherwise, it fails and returns $\emptyset$. If it succeeds, the returned edge is chosen uniformly at random from the set of all outgoing edges. If there is no outgoing edge, FindAny returns $\emptyset$ with probability 1. In the asynchronous model, this takes time proportional to the height of $T$ and $O(|T|)$ messages.
\end{lemma}
To find the minimum outgoing edge, it is possible to use FindAny and do a binary search on the weight of the desired edge. To do this, assuming that all edge weights are in range $[l, h)$, nodes first consider only those incident edges whose weight is in $[l, (l+h)/2)$, and run FindAny $O(\log n)$ times. Then with high probability, if there is an outgoing edge in this range it will be found and we can narrow down the search to lighter edges. Otherwise, it must be that the minimum outgoing edge is in $[(l+h)/2, h)$. In each step of the binary search, the current bounds are broadcast to all nodes in the tree. After $O(\log n)$ steps of the binary search the minimum outgoing edge is found. This takes $O(n \log^2 n)$ messages, and $O(height(T) \log^2 n)$ time overall, ($height(T)$ is the height of $T$.) 


 \old{
 As in \cite{ghaffari2018distributed} and \cite{gmyr2018}, we use the fact that the subgraph $G_{sparse}$ consisting of the edges in $F$  and all edges incident to at least one low-degree endpoint is a low diameter sparse spanner. 
To construct the nearly BFS tree, the network runs Gallager's BFS layered algorithm \cite{gallager82} on $G_{sparse}$.\\ 
To construct the MST,  $F$ is used to find, in time linear in $n$, a minimum spanning forest $F_{min}$ of $G'$ and then the MST of the sparse subgraph consisting of the edges in $F_{min} $ and the edges incident to at least one low-degree node is found.  Applying Awerbuch's algorithm \cite{awerbuch1987optimal} to this subgraph results in an MST algorithm with time $O(n)$ and message complexity $\tilde{O}(n^{3/2})$.
}

\old{
\subsection{Definition and subroutines}
To find minimum outgoing edges, a common approach is to have nodes test all of their incident edges, in the order of weight, to see whether an edge is outgoing or not. This results in $\Omega(m)$ messages. King et al. \cite{king2015construction} provided an \emph{asynchronous} subroutine, called FindAny, that with constant probability,  finds an edge leaving a fragment $T$ and uses only $\tilde{O}(|T|)$ messages, where $|T|$ is the number of nodes in $T$. This subroutine is used in \cite{ghaffari2018distributed} and \cite{mashreghi_disc_2018} to compute the MST. The following lemma on FindAny subroutine will be used throughout the algorithm.

\begin{lemma}{\cite{king2015construction}}
\label{findAnyLemma}
With probability 1/16, FindAny succeeds and finds an outgoing edge from fragment $T$. Otherwise, it fails and returns $\emptyset$. If it succeeds, the returned edge is chosen uniformly at random from the set of all outgoing edges. If there is no outgoing edge, FindAny returns $\emptyset$ with probability 1. In the asynchronous model, this takes time proportional to the height of $T$ and $O(|T|)$ messages.
\end{lemma}

Also, note that $FindAny(E')$ means that FindAny is performing this search on a subset $E'$ of the edges. \par
To find the minimum outgoing edge, it is possible to use FindAny and do a binary search on the weight of the desired edge. To do this, assuming that all edge weights are in range $[l, h)$, nodes first consider only those incident edges whose weight is in $[l, (l+h)/2)$, and run FindAny $O(\log n)$ times. Then with high probability, if there is an outgoing edge in this range it will be found and we can narrow down the search to lighter edges. Otherwise, it must be that the minimum outgoing edge is in $[(l+h)/2, h)$. In each step of the binary search, the current bounds are broadcast to all nodes in the tree. After $O(\log n)$ steps of the binary search the minimum outgoing edge is found. This will take $O(n \log^2 n)$ messages, and $O(height(T) \log^2 n)$ time overall. ($height(T)$ is the height of $T$.) }

Two other subroutines used in this paper are from \cite{mashreghi_disc_2018}: \\
\noindent
$ApproxCut(T)$: returns an estimate in $[k/32, k]$ where $k$ is the number of outgoing edges from $T$ and $k > c \log n$ for $c$ a constant.
It requires $O(n \log n)$ messages.\\
\noindent
$ThresholdDetection(k)$: The leader is informed w.h.p. when the number of events experienced by the nodes in its tree reaches the threshold $k/4$. The event here is the receipt of $\langle \textit{Low-degree} \rangle$ over an outgoing edge. 

\subsection{Outline of algorithms}
\old{A node is  {\it high-degree} if its degree is at least $\sqrt{n} \log^2 n$. Otherwise, it is a {\it low-degree} node.
A node selects itself to be a {\it star node} with probability of $\frac{c}{\sqrt{n} \log n}$.

Our algorithm computes the MST with probability $1 - 1/n^{c'}$, where $c'$ is any constant, and $c$ in self-selecting star nodes is a constant depending on $c'$ so that each high-degree node is adjacent to a star node w.h.p.. An edge belongs to the subgraph $G'$ if and only if both of its endpoints are either high-degree or star nodes. Note that $G'$ is not necessarily connected. }

 Recall that an edge belongs to the subgraph $G'$ if and only if both of its endpoints are either high-degree or star nodes.
 
\begin{enumerate}

\item Initial fragments are formed in parallel consisting of star nodes and their high degree neighbors. Our new algorithm {\sc maximaltree}  incorporates the asynchronous waiting technique of \cite{mashreghi_disc_2018} to find the edges in $F$ and enable each high-degree node to learn its low-degree neighbors while building a spanning forest $F$ on $G'$ from these initial fragments. See Section \ref{s:maximaltree}. 

To construct the nearly BFS tree, the network runs Gallager's Layered BFS algorithm \cite{gallager82} on $G_{sparse}$, the subgraph of edges in $F$ and all edges incident to at least one low-degree node.  (Section \ref{s:bfs})

\item We compute the minimum weight spanning forest $F_{min}$ on $G'$ in Section \ref{s:fmin}. We use an idea of \cite{ghaffari2018distributed}, along with the fact that the obtained trees in part (1) all have diameter of $O(\sqrt{n})$. This is done using the low-diameter trees  in $F$, to simulate the MST algorithm of \cite{ghaffari2018distributed} in the asynchronous model on each connected component of $G'$.

\item  In Section \ref{s:mst}, we define $S_{min}$ to be the edges in $F_{min}$ and the edges incident to at least one low-degree node. We run an asynchronous MST algorithm with $O(n)$ time and $O(m)$ message complexity  (e.g. \cite{awerbuch1987optimal}) on $S_{min}$. The result is the MST of $G$.

\end{enumerate}
The challenge in part (1) is to modify the algorithm of \cite{mashreghi_disc_2018}, to have star nodes grow fragments of the spanning forest only on $G'$ (not including any low-degree node). In part (2), we observe that the algorithm of \cite{ghaffari2018distributed} consists of a number of Boruvka style phases where the computations in each phase are inherently asynchronous. We show that using the low-diameter trees from part (1), we need to synchronize the nodes only at the beginning of each such phase. This allows us to run, without any asymptotic overhead on the complexity, the MST algorithm of \cite{ghaffari2018distributed} in the asynchronous network. We run the algorithm on each connected component of $G'$. This results in the minimum spanning forest for part (2) of our algorithm. 

For the last part, the challenge is that there is no global clock to announce the beginning of these three parts to nodes. Therefore, we have all nodes start running the MST protocol of part (3) while part (1) and (2) are being computed. To do this, the MST protocol is delayed by the high-degree or star nodes if they have not yet computed their corresponding tree in $F_{min}$. We show, however, that this approach in coordinating the protocols does not affect the asymptotic time and message complexity of the MST protocol. 
\\[-0.35in]

\section{Algorithms}  \label{s:algs}  
\subsection{{\sc maximaltree:} Constructing a Spanning Forest $F$ on $G'$} \label{s:maximaltree}
Initially, similar to the idea of \cite{ghaffari2018distributed}, we form a number of height-one fragments around the star nodes. Star nodes send a message to their neighbors and await their response. We observe that w.h.p. each high-degree node is adjacent to a star node and assume this property holds during the course of our analysis. Each high-degree node considers itself to be the child of the first star node it has heard from and responds accordingly. This gives us the \emph{initial fragments} of the spanning forest, where each fragment is formed by a star node and a subset of its neighbors. Since a star node is self-selected with probability of $\frac{c}{\sqrt{n} \log n}$, w.h.p., there are at most $O(\sqrt{n}/\log n)$ star nodes.

\old{
\subsection{Implementation of the Spanning Forest Algorithm} We now describe the details of the spanning forest construction.
\subsubsection{A - Initialization and preliminaries} All nodes first run the Initialization procedure (Algorithm 1). Star nodes self-select themselves with probability of $\frac{1}{\sqrt{n} \log n}$ and then send a $\langle Star \rangle$ message to all of their neighbors. Each recipient of  $\langle Star \rangle$  becomes a child of a Star node iff the recipient is of high degree, is not a star, and  the $\langle Star \rangle$ message is the first such message it has received. In any case, a recipient responds to let the sender know whether or not it has become its child.  The initial height-one fragments are formed in this way by stars and their children. 

Upon waking up, low-degree nodes that are not stars send a $\langle \textit{Low-degree} \rangle$ message to all of their neighbors. 

The following information is stored: 

\noindent
 {\it  Fragment IDs:} Each fragment has an ID equal to that of the star node ``leading" the fragment. Each high-degree and star node $x$ has a variable $xID$ set to the ID of the fragment it most recently joined. We use $xID$ to refer both to the fragment ID and the fragment tree of node $x$. f after initialization all high-degree nodes belong to a fragment. 
 
\noindent
{\it Local information of a high-degree node:} Whenever a high-degree node receives a $\langle \textit{Low-degree} \rangle$ message, it remembers that incident edge so it can later exclude it from the search for outgoing edges. 
}

\noindent {\it From initial trees to $F$:} Now, we only need to connect these initial fragments using $O(\sqrt{n}/\log n)$ edges. The challenge is to find an outgoing edge to a high degree node or a star node. Each phase of the MaximalTree protocol returns such an edge if there is one, or until there is a ``terminal" run when all nodes in the star's fragment have received messages from all their (low-degree) neighbors outside the fragment and none of these neighbors are high degree or star nodes.   Each leader of a fragment runs MaximalTree until it no longer is a leader of its fragment or a terminal run is reached. 

\old{
 calling FindAny $16 c\log n$  times to sample a set of outgoing edges without replacement (while loop at line \ref{sampleloop}). Then, the degree and the star status of the external endpoint of the sampled edges is queried. The results are given to the fragment leader.\par 
After sampling, three cases might happen:
\begin{enumerate}
\item \textbf{A sampled edge goes to a high-degree or star node in another fragment:}  A merge request message is sent over the edge. The recipient of the message handles it using the procedure ReceiptOfMerge (Algorithm 2).
\item \textbf{The size of the sample is strictly less than $2 \log n$:} We prove in Lemma  \ref{samplem} that if there are more than $2 \log n$ outgoing edges, after sampling $16 c \log n$ times without replacement, w.h.p., the size of the sample must be at least $2 \log n$. Therefore, in this case, it must be that we have explored all outgoing edges and none of them went to a high-degree or star node. So, the fragment tree is maximal in $G'$ and the algorithm terminates.
\item \textbf{The size of the sample is at least $ 2 \log n$, and all sampled edges go to low-degree nodes:} Since FindAny finds an outgoing edge uniformly at random from the set of edges used for search, this means that w.h.p. at least half the outgoing edges are to low-degree nodes. In this case, the fragment waits for more $\langle \textit{Low-degree} \rangle$ messages to be received from across the cut. To do the waiting, ApproxCut and ThresholdDetection are used from \cite{mashreghi_disc_2018}. ApproxCut approximates the size of the cutset of $(T, V \setminus T)$ for a fragment $T$ and gives the approximation to ThresholdDetection. Then, in the ThresholdDetection procedure, the leader is re-triggered to start a new phase only when $\langle \textit{Low-degree} \rangle$ messages have been received over a constant fraction of the edges in the cut. This allows the high-degree nodes inside the fragment to ignore more edges in the next phase. Therefore, repeating this process $O(\log n)$ times, results in possible high-degree nodes to be found in the future phases.
\end{enumerate}

\begin{lemma}
\label{samplem}
If there are more than $2 \log n$ outgoing edges, after sampling $16 c \log n$ times without replacement, w.h.p., the size of the sample must be at least $2 \log n$.
\begin{proof}
Let $X_i$ be the indicator random variable that is 1 if FindAny succeeds when the counter is equal to $i$. Also, let $X$ be the sum of the $X_i$'s.
After calling FindAny $16 c \log n$ times, $E[X]=c \log n$ from Lemma \ref{findAnyLemma}. By the Chernoff bound $Pr(X \leq (1 - \delta) E[X]) \leq e^{-\frac{\delta^2 E[X]}{2}}$, if we pick $\delta=1/2$ and a sufficiently large $c$, the probability that $X < 4 \log n$ is bounded by $1/n^{c_p}$, where $c$ depends on $c_p$.
\end{proof}
\end{lemma}
}

\noindent
{\it Merging fragments:} After finding an outgoing edge $e$ to a high degree or star node, a fragment (say $A$)  sends a merge request message along the edge with its fragment ID to the other fragment (say B) and waits for a response. If  $ID(A) < ID(B)$ then the endpoint in $B$ accepts the request and  fragment $A$ then updates the fragment ID of all of its nodes to $B$, that is $B$'s leader becomes the leader of the combined fragment. However, if $A$'s ID is greater, the endpoint of $e$ in B waits (and $A$ waits) until:\\
i)  $B$ selects $e$ or another edge with an endpoint in $A$, in which case $B$ merges with $A$ along the edge selected by $B$ and $B$'s ID become equal to $A$'s.  The leader of $A$ becomes the leader of the merged fragment and receives a message that its attempted merge is rejected.  \\
ii) $B$ selects an outgoing edge to a different fragment $A'$ and $ID(B)$ is updated so that $ID(B) > ID(A)$. $A$ then updates its ID to $B$'s as described above.  \\[-0.30in]

\begin{observation} \label{o:height}
The sum of the diameters of the trees in the spanning forest F is $O(\sqrt{n}/\log n)$.
\end{observation}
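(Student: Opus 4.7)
The plan is to track how the sum of diameters evolves from the initial fragments through the merge operations. First I would bound the number of initial fragments. Since each node self-selects as a star with probability $\Theta(1/(\sqrt{n}\log n))$, a standard Chernoff argument gives that there are $O(\sqrt{n}/\log n)$ star nodes w.h.p. Every initial fragment is rooted at a star and consists of that star together with some of its high-degree neighbors, so each initial fragment is a height-one tree of diameter at most $2$, and there are at most $O(\sqrt{n}/\log n)$ such fragments.

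Next I would analyze the merges. Each call to \textsc{MaximalTree} that adds an edge to $F$ combines two existing fragments into one, so the number of merge edges added is at most (initial fragment count $-$ final tree count) $\le O(\sqrt{n}/\log n) - 1$. The key observation is that when two trees $T_1,T_2$ of diameters $d_1,d_2$ are joined by a single edge, the resulting tree has diameter at most $d_1+d_2+1$. Letting $S$ denote the sum of the diameters of the trees currently in $F$, merging two trees replaces them with one tree whose diameter is at most $d_1+d_2+1$, so $S$ increases by at most $1$ per merge.

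Combining these, the initial value of $S$ is at most $2\cdot O(\sqrt{n}/\log n)=O(\sqrt{n}/\log n)$, and it grows by at most $1$ for each of the $O(\sqrt{n}/\log n)$ merges, so at termination $S=O(\sqrt{n}/\log n)$, which is exactly the claim.

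The only step that requires any care is the high-probability bound on the number of stars, and even that is a routine Chernoff bound; there is no serious obstacle here, since the structural statement ``diameter of the merged tree $\le d_1+d_2+1$'' is immediate from the fact that the merging edge is the unique path between the two sub-trees in the combined tree.
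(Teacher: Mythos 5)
Your proof is correct. It reaches the paper's bound via a slightly different route: the paper bounds the \emph{height} of each tree in $F$ by twice the number of star nodes contained in that tree and then sums over trees (using diameter $\le 2\cdot$height), whereas you track the \emph{sum of diameters} directly as a potential function, observing that joining two trees $T_1,T_2$ by a single edge yields a tree of diameter at most $\mathrm{diam}(T_1)+\mathrm{diam}(T_2)+1$, so each of the $O(\sqrt n/\log n)$ merges increases the potential by at most $1$. Both arguments rest on the same two facts --- w.h.p.\ there are $O(\sqrt n/\log n)$ star nodes (hence initial fragments and hence merge edges), and each initial fragment has constant diameter --- but your per-merge accounting is a bit more explicit and sidesteps the need to argue about what happens to a fragment's height when it is re-rooted upon being absorbed into another fragment, which the paper's per-tree height bound glosses over. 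The trade-off is negligible here; both give $O(\sqrt n/\log n)$.
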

\begin{proof}  
There are $O(\sqrt{n}/\log n)$ initial fragments. Since each fragment consists of a star node and its high degree neighbors, the height of each fragment is one. Therefore, the height of any tree in F, obtained by adding edges between the initial fragments, is no more than twice the final number of star nodes in that tree which is $O(\sqrt{n}/ \log n)$. Since diameter of a tree is at most twice the height, the sum of the diameters of the trees in F is $O(\sqrt{n}/\log n)$. 
\end{proof}

\old{
\begin{observation}
\label{l:awareness}
W.h.p., every high degree node knows all its (low-degree) neighbors which lie outside its tree $T$ in $F$ once $\sc{maximaltree} $ is terminated by $T$'s leader. 
\end{observation}

}

\subsubsection{Analysis of \sc{maximaltree}}
\begin{lemma}
\label{lemmak1}
Computing the maximal trees of the spanning forest F requires $O(n)$ time and $O(n^{3/2} \log^2 n)$ messages w.h.p. \\[-0.35in]
\end{lemma}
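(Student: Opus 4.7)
My proof would split the cost of \textsc{maximaltree} into (a) the one-time initialization broadcasts by star and low-degree nodes, and (b) the phases of the main loop run by each fragment leader, and then bound the number of phases and the cost per phase separately.

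For (a), every low-degree node sends a $\langle\textit{Low-degree}\rangle$ message to each of its neighbors, contributing $\sum_{v\text{ low-degree}} \deg(v) = O(n\cdot\sqrt{n}\log^2 n) = O(n^{3/2}\log^2 n)$ messages; each of the w.h.p.\ $O(\sqrt{n}/\log n)$ stars sends a $\langle\textit{Star}\rangle$ message to each of its neighbors, which a Chernoff bound on star-incident edges shows is $O(n^{3/2}/\log n)$ messages w.h.p. Initialization itself takes $O(1)$ time. This step alone already matches the claimed message bound, so my remaining task is to show that the phases do not exceed it and do not exceed $O(n)$ time.

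For (b), I would first cost a single phase on a fragment $T$: FindAny is invoked $O(\log n)$ times, each costing $O(|T|)$ messages and $O(\text{height}(T))$ time by Lemma \ref{findAnyLemma}; the external endpoints of the $O(\log n)$ sampled edges are queried for degree/star status at a cost of $O(\log n)$ extra messages; and one call each to ApproxCut and ThresholdDetection uses $O(|T|\log n)$ messages and $O(\text{height}(T)\log n)$ time. So a phase costs $O(|T|\log^2 n)$ messages and $O(\text{height}(T)\log^2 n)$ time. I would then bound the number of phases by classifying each as: (i) a merge-triggering phase, (ii) a halving-of-the-cut phase in which the sample lands entirely in low-degree endpoints and ThresholdDetection is rearmed with a larger threshold, or (iii) a terminal phase. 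Type (i) phases number at most $O(\sqrt{n}/\log n)$ overall, since that is the initial fragment count. For type (ii), FindAny's uniform-sampling guarantee implies that if the whole sample is low-degree then w.h.p.\ at least half of the current outgoing edges are low-degree; ThresholdDetection then requires a constant fraction of those to deliver their $\langle\textit{Low-degree}\rangle$ messages before the next phase fires, so the unknown outgoing edges at least halve each round. Hence type (ii) occurs at most $O(\log n)$ times per tree, for $O(\sqrt{n})$ total.

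Combining: the phase message cost is at most $\sum_T (\text{phases of }T)\cdot O(|T|\log^2 n) = O(n\log^3 n)$, since $\sum_T |T| = O(n)$ and each tree has $O(\log n)$ phases; this is dominated by initialization. For time, phases belonging to non-interacting trees run in parallel, so the only serial bottleneck is a chain of merges along what becomes the final tree: by Observation \ref{o:height} this chain has length at most $O(\sqrt{n}/\log n)$, and each merge phase costs $O(\text{height}\cdot\log^2 n) = O(\sqrt{n}\log n)$ time, yielding $O(n)$ in total; the per-tree waiting phases contribute only $O(\sqrt{n}\log^2 n)$ in parallel, which is absorbed. The main obstacle I anticipate is case (ii): one must pin down precisely which edge set FindAny searches in successive phases so that $\langle\textit{Low-degree}\rangle$ messages counted by ThresholdDetection correspond to edges reliably excluded next time, and so that Lemma \ref{findAnyLemma}'s uniform-random guarantee survives the union bound over the $O(\log n)$ halvings across the $O(\sqrt{n}/\log n)$ trees.
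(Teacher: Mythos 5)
Your decomposition into initialization cost plus per-phase cost, your phase classification, and your observation that initialization already supplies the $O(n^{3/2}\log^2 n)$ term are all essentially the paper's argument, and the time analysis ($O(\sqrt{n}/\log n)$ serial merges, each $O(\mathrm{height}\cdot\log^2 n)=O(\sqrt{n}\log n)$, yielding $O(n)$) matches the paper too. However, your message accounting for the phases contains a genuine error. You assert $\sum_T(\text{phases of }T)\cdot O(|T|\log^2 n)=O(n\log^3 n)$ on the grounds that ``$\sum_T|T|=O(n)$ and each tree has $O(\log n)$ phases,'' but neither half of that justification holds over the run of the algorithm. The $O(\log n)$ waiting-phase bound applies \emph{per merge attempt}, not per final tree: after a merge the fragment acquires a fresh cut and the halving process restarts, so a fragment that participates in $\Theta(\sqrt{n}/\log n)$ merges accrues $\Theta(\sqrt{n})$ waiting phases, not $O(\log n)$. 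Likewise $\sum_T|T|=O(n)$ is a snapshot identity; it does not let you charge each phase to a fixed-size tree, because $|T|$ grows between merges and can reach $\Theta(n)$. The correct accounting, as in the paper, multiplies the $O(|T|\log^2 n)\le O(n\log^2 n)$ per-merge cost by the $O(\sqrt{n}/\log n)$ merges, giving $O(n^{3/2}\log n)$ for the phases. This is harmless for the lemma's statement since initialization dominates, but your intermediate bound is wrong and would mislead in a setting where initialization were cheaper. Your closing caveat about making the FindAny uniform-sampling guarantee survive a union bound over all phases is a legitimate concern that the paper's proof leaves implicit; it should be resolved by noting that there are only $\mathrm{poly}(n)$ FindAny invocations in total, so a per-call failure probability of $n^{-c_p}$ for sufficiently large $c_p$ suffices.
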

\begin{proof}
In the initialization, there are at most $n$ low-degree non-star nodes that send to at most $O(\sqrt{n} \log^2 n)$ of their neighbors. Also, there are at most $O(\sqrt{n}/\log n)$ star nodes that send to all of their neighbors and receive a response. This overall takes $O(n^{3/2} \log^2 n)$ messages and $O(1)$ time.

To find an outgoing edge, in each phase, each fragment $T$ runs FindAny $O(\log n)$ times. Each phase uses  $O(|T| \log n)$ messages and $O(height(T) \log n)$ time. Moreover, finding an outgoing edge to a high-degree or star node may require $O(\log n)$ phases of sampling and waiting. Therefore, each fragment requires $O(|T| \log^2 n)$ messages and $O(height(T) \log^2 n)$ time to find an outgoing edge. 

A merge request is rejected only when an outgoing edge becomes an internal edge because of another merge, and hence the number of rejected merge requests is $O(\hbox{number of merges})$. There are at most $O(\sqrt{n}/\log n)$ merges that have to be performed using the outgoing edges on the initial height-one fragments. Also, height of a fragment is always $O(\sqrt{n} /\log n)$ (Observation \ref{o:height}). Therefore, the overall time will be $O((\sqrt{n} /\log n)^2 \cdot \log^2 n) = O(n)$. The overall message complexity is $O(n \cdot \log^2 n \cdot \sqrt{n} /\log n) = O(n^{3/2} \log n)$.
\end{proof}

We also prove the following lemma regarding the correctness of the algorithm.
\begin{lemma}
\label{lemmak2}
With high probability, MaximalTree protocol always makes progress. When it terminates, the obtained trees are maximal trees in F and all outgoing edges are explored. 
\end{lemma}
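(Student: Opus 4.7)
The plan is to split the lemma into (i) deadlock-freedom (progress is always made until termination) and (ii) correctness at termination (each obtained tree is maximal in $G'$ and every outgoing edge has been classified). I would handle (i) with a monotonicity argument on fragment IDs, and (ii) with a Chernoff-style analysis of the FindAny sampling loop combined with the ThresholdDetection waiting mechanism.

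For progress, I would define the \emph{waiting digraph} whose nodes are the current fragments and whose arc $X \to Y$ indicates that $X$ has a pending merge request being held at $Y$. By the merge rule in Section~\ref{s:maximaltree}, $Y$ holds such a request only when $ID(X) > ID(Y)$, so every arc of the waiting digraph is strictly ID-decreasing and hence the digraph is acyclic. There is therefore always some fragment not waiting on any other; by Lemma~\ref{findAnyLemma} and Observation~\ref{o:height} its current sampling phase finishes in finite time, after which it either finds an outgoing edge (triggering an accepted merge if the neighbor has smaller ID, or adding a new strictly ID-decreasing arc otherwise) or enters a ThresholdDetection wait that strictly shrinks the set of unexplored outgoing edges. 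Cases (i) and (ii) of the excerpt's wait resolution then clear every held request once the holder's own search completes, so the protocol never deadlocks and, because the initial fragment count is $O(\sqrt{n}/\log n)$ w.h.p., it terminates after at most $O(\sqrt{n}/\log n)$ merges.

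For correctness at termination, each sampling phase invokes FindAny $\Theta(\log n)$ times; since each call succeeds independently with probability $1/16$ by Lemma~\ref{findAnyLemma}, a Chernoff bound shows that if a fragment $T$ still has more than $2\log n$ outgoing edges, then at least $2\log n$ distinct edges are returned w.h.p. Termination of $T$ thus occurs only when either fewer than $2\log n$ sampled edges come back (so all outgoing edges have been exhausted w.h.p.) or every sampled edge goes to a low-degree neighbor, whereupon ThresholdDetection-driven waiting forces a constant fraction of the currently unexplored low-degree-endpoint outgoing edges to be reported and excluded before the next resample. After $O(\log n)$ such iterations, every low-degree neighbor of a high-degree node of $T$ that lies outside $T$ has had its $\langle \textit{Low-degree}\rangle$ message recorded, so any remaining $G'$-outgoing edge would be exposed with high probability at the next sample; contrapositively, on termination no $G'$-outgoing edge exists, $T$ is maximal, and simultaneously every outgoing edge has been explored. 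A union bound over the $O(\sqrt{n}/\log n)$ fragments and $O(\log n)$ iterations keeps the total failure probability polynomially small.

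The main obstacle I expect is verifying that the interaction between concurrent pending merge requests and ThresholdDetection waits is race-free: when a request held at $B$ becomes resolvable because $B$'s own ID has been raised past the sender's (case (ii)), one must check that no second pending request at $B$ is mishandled and that every reachable distributed state corresponds to a unique, consistent assignment of fragment IDs. The acyclic-ID argument above handles this once one uses the fact that ID updates within a fragment are serialized by the fragment leader and propagated along the current tree, but the bookkeeping must be carried out carefully.
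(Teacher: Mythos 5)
Your decomposition into deadlock-freedom and termination correctness, and the Chernoff/ThresholdDetection argument for the latter, closely parallel the paper's own reasoning; the proof is essentially correct. The one place where you take a genuinely different route is progress: you build a waiting digraph on fragments, argue it is acyclic because every held-request arc is strictly ID-decreasing, and then pick a sink. The paper instead observes directly that the fragment with the globally smallest current ID can never have a merge request held against it (any recipient has a larger ID, so it accepts), which gives a one-line progress argument; your DAG framing is sound but uses more machinery than is needed, and its sink is not guaranteed to merge in one step (it may simply acquire a new outgoing arc), so you still have to appeal to a global potential to bound the number of rounds. Two smaller points. First, a sign slip: you say the sink triggers an accepted merge \emph{if the neighbor has smaller ID} and adds a new arc \emph{otherwise}; by the rule in \textsc{ReceiptOfMerge}, a request from $A$ to $B$ is accepted when $ID(A) < ID(B)$, i.e., when the neighbor has the \emph{larger} ID, and is held (new arc) when the neighbor's ID is smaller. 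As written, your sentence contradicts the waiting-digraph definition you just gave (it would add an ID-\emph{increasing} arc), though the swap does not affect the conclusion. Second, your claim that after $O(\log n)$ ThresholdDetection cycles \emph{every} outside low-degree neighbor has had its $\langle \textit{Low-degree}\rangle$ message recorded is stronger than what is needed or what the paper asserts; since ApproxCut only gives a constant-factor estimate and ThresholdDetection only waits for a constant fraction of events, what one actually gets is that the number of unrecorded low-degree outgoing edges becomes small enough that the next FindAny sample of size $\geq 2\log n$ will, w.h.p., contain a $G'$-outgoing edge if one exists or will have size $< 2\log n$ if none does. Stating it as you did would require a separate argument about the adversarial message scheduler that the protocol does not actually make.
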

\begin{proof}
When the fragments find their outgoing edges, the policy for merging always allows the fragment with lowest ID to be merged with another fragment. So, in terms of merging, the algorithm always makes progress. We should only argue that if an outgoing edge to a high-degree or star node exists, it will be found w.h.p.. We call such an edge a valid outgoing edge for the sake of this proof.\par 
If there is a valid outgoing in the sample, we are done. Otherwise, the algorithm explores all low-degree non-star nodes outside the fragment by waiting for them to send their $\langle \textit{Low-degree} \rangle$ messages. This happens w.h.p. since we use ApproxCut and ThresholdDetection. ApproxCut approximates the cut within a constant factor and ThresholdDetection only signals the leader when a constant fraction of the $\langle \textit{Low-degree} \rangle$ messages are received. Therefore,  each time these two subroutines are applied, a constant fraction of the messages that we expect to be received, will be received with high probability. So, after repeating this $O(\log n)$ times, the number of $\langle \textit{Low-degree} \rangle$ messages that have not been received becomes so low that they will not interfere with finding a valid outgoing edge. In the worst-case, when the majority of such messages are received and their corresponding edges are excluded from the search, FindAny finds a valid outgoing edge.\par
If there is no valid outgoing edge, the fragment tree is maximal. The last set of sampled edges before termination includes all remaining outgoing edges,  and the algorithm terminates.
\end{proof}

\begin{algorithm}
\caption{Initialization of the minimum spanning tree algorithm. Every node runs this protocol independently upon wake up.}
\begin{algorithmic}[1]
\Procedure{Initialization}{}
\label{AinitAlg}
\State Every node selects itself to be a \emph{star} node with probability of $\frac{c}{\log n \sqrt{n}}$.
\State Star nodes send a $\langle Star \rangle$ message to all of their neighbors and wait for the response of each message. In response, if a node is high-degree and this is the first star it has hear from it sends back $\langle \textit{Child} \rangle$. Otherwise, the node responds $\langle \textit{Not-Child} \rangle$. If the node receiving the $\langle Star \rangle$ message is also a star, it responds by $\langle \textit{Star-node}\rangle$.
\label{child}
\State Low-degree nodes that are not a star send $\langle \textit{Low-degree} \rangle$ messages to all of their neighbors.
\EndProcedure
\end{algorithmic}
\end{algorithm}

\begin{algorithm}
\caption{Finds a maximal tree of the spanning forest $F$ in $G'$. $x$ is any star node that is also the leader of this fragment.}
\begin{algorithmic}[1]
\Procedure{MaximalTree}{$x$}
\label{maximal_tree}
\Repeat {\textit{ } \texttt{//beginning of a phase}}
\State $counter \leftarrow 0, A \leftarrow \emptyset$.
\label{firstlineafter}
\While{$counter <  16 c \log n$}
\label{sampleloop}
\State  Leader calls $FindAny(E \setminus A)$, where nodes exclude from the search the edges that they have received a $\langle \textit{Low-degree} \rangle$ message from, i.e., nodes that are low-degree but not star.
\If {$FindAny$ is successful and finds an edge $(u, v)$ ($u \in T$ and $v \notin T$)}
\State $A = A \cup {(u,v)}$. \texttt{//the set of outgoing edges}
\State $u$ sends a message and asks for $v$'s degree, and whether or not it is a star node. $u$ then sends up the result to the leader.
\label{putinfound}
\EndIf
\State $counter \leftarrow counter + 1$.
\EndWhile
\If {$\exists (u, v) \in A$ s.t. $v$ is high-degree or star}
\State Leader chooses an edge to a high-degree or star node arbitrarily, and sends a $\langle Merge, ID \rangle$ message over it. ($ID$ is this fragment's ID.)
\State If accepted, this fragment is updated to be a subtree of fragment $vID$ rooted at node $v$.
\State If rejected, the leader starts a new phase.
\ElsIf {$|A| < 2 \log n $}
\State Leader terminates the algorithm.
\Else { \texttt{ // waiting} }
\State $r \leftarrow ApproxCut() / 2$. Then leader calls $ThresholdDetection(r)$.
\State Leader waits to receive a trigger message and then starts a new phase.
\label{waitperiod}
\EndIf
\Until
\EndProcedure
\end{algorithmic}
\end{algorithm}

\begin{algorithm}
\begin{algorithmic}[1]
\Procedure{ReceiptOfMerge}{$\langle Merge, tID\rangle $}
\State When node $x$ receives the message $\langle Merge, tID \rangle $ from node $t$:
\State If $x$ is a high-degree node it waits to hear from at least one star node. Else, if $x$ is a star node it waits to hear the response of its initialization messages. 
\If{$tID < xID$}
\State $x$ immediately responds by $\langle Accept, xID \rangle$, and considers $t$ as a child from now on.
\ElsIf{$tID > xID$}
\State $x$ delays the response until $xID$ becomes greater than or equal to ($\geq$) $tID$.
\Else {\textit{ } $tID = xID$}
\State $x$ rejects the merge by a $\langle Reject \rangle$ message.
\EndIf
\EndProcedure
\end{algorithmic}
\end{algorithm}

\subsection{ Constructing a spanning tree of height $O(D + \sqrt{n})$} \label{s:bfs}
The Layered BFS algorithm for asynchronous networks due to Gallager \cite{gallager82} (see James Aspnes's online notes \cite{aspnesnotes}, Section 4, pp 25)  runs in time $O(D^2)$ and uses messages $O(E+VD)$ on any graph with diameter $D$. This simple algorithm assumes an initiator which starts the algorithm, and then the tree is built one layer at a time, reporting back to the root when the layer is done. 

Let  $G_{sparse}$ be the subgraph of $G$ consisting of edges in $F$ and the edges with at least one low-degree endpoint. 
From Observation \ref{o:height}, and as observed in \cite{ghaffari2018distributed}, page 5-6, we know $G_{sparse}$ has diameter $O(D_G+ \sqrt{n})$ where $D_G$ is the diameter of $G$.
From Lemma \ref{lemmak2}, each node knows which of its neighbors are in $G_{sparse}$. 

The full algorithm is as follows: At the start, all nodes in $G$ awake to construct $F$. An initiator, once it has finished with the construction of $F$, initiates the Layered BFS algorithm on $G_{sparse}$. Each node responds to the messages sent by the Layered BFS algorithm only after it has completed the construction of $F$. For this algorithm, each node considers its incident edges to be only those in $ G_{sparse}$. We obtain the following: 
 BFS in $G_{sparse}$ is constructed of diameter $O(D_G + \sqrt{n})$ using time $O(D_G^2 +n)$ and messages $\tilde{O}(n^{3/2}+nD_G)$, proving Theorem \ref{t:bfs}.  \\[-.35in]

\subsection{Constructing the Minimum Spanning Forest $F_{min}$ on $G'$} \label{s:fmin}
It is easier to describe the algorithm for constructing $F_{min}$ as an MST algorithm on each connected component of $G'$. Formally, let us fix a connected component $C$ in $G'$. Let $T$ be the low-diameter tree computed on $C$ from part (1). Now, we describe how to compute $T_{min}$, the minimum weight spanning tree of $C$, using $T$ whose diameter is bounded by $O(\sqrt{n})$ (we omit the $\log n$ division as it is not needed in this part). \par 
We simulate the synchronous MST algorithm of \cite{ghaffari2018distributed} in the asynchronous model, on $C$. So, our algorithm in this part closely follows their algorithm. (Please see the appendix for a detailed description of their algorithm since we omit the details that do not affect the synchronization process.) Our simulation is in a way that results in no asymptotic overhead on the time and message complexity of \cite{ghaffari2018distributed}. Our algorithm has two parts: \par 

\noindent
\textit{(A) Computing fragments of $F_{min}$ with low diameter:} Let $r$ be the root of $T$. This part consists of $O(\log n)$ Boruvka phases. However, in each phase only fragments with height less than $\sqrt{n}$ can look for minimum outgoing edges. Also, only certain merge requests will be accepted. The objective is to grow $O(\sqrt{n})$ MST fragments where the diameter of each fragment is bounded by $O(\sqrt{n})$ (in particular between $\sqrt{n}$ and $5\sqrt{n}$). The steps are as follows:
\begin{enumerate}
\item Initially, each node is a fragment.
\item{Begin the search:} $r$ tells all nodes (via a broadcast) to begin the search for the minimum outgoing edges. Then, each fragment computes its minimum outgoing edge using a binary search and FindAny (see Section \ref{s:algA}). When nodes in a fragment know that the minimum outgoing edge is computed, they let $r$ know using a convergecast.  Note that they just let $r$ know that they have finished the computation and do not send the found edge. $r$ waits until all fragments compute their minimum outgoing edges.
\item {Begin the merge:} $r$ tells all nodes to merge using the recently found minimum outgoing edges. All fragments then send merge requests over their minimum outgoing edges. If the merge is accepted, the fragment IDs of the nodes will be updated. Once fragments finished this part, they let $r$ know by a convergecast.
\item {Begin to truncate:} $r$ tells all fragments to make sure that their height stays in $[\sqrt{n}, 5 \sqrt{n}]$. Once the fragments merge, it is possible that the height of the resulting fragments exceeds $5 \sqrt{n}$. However, because of the merging rules in \cite{ghaffari2018distributed}, it is guaranteed the height will not exceed $5 \sqrt{n}$ by more than a constant factor. To make sure that all fragments have height in $[\sqrt{n}, 5 \sqrt{n}]$ we do as follows. Each fragment leader broadcasts a message to all nodes so that all nodes know their distance from the leader. Whenever the distance of a node from the fragment leader becomes a multiple of $\sqrt{n}+1$, the edge between that node to its parent is discarded. In case that the height of the remaining subtree gets below $\sqrt{n}$ that operation is undone. Once truncating is done, nodes notify $r$ via a convergecast.
\item {Check the phase:} $r$ computes the minimum height of all fragments via broadcast and convergecast. If the minimum height is $< \sqrt{n}$, $r$ goes to Step 2 and announces the beginning of a new phase. Otherwise, $r$ announces the beginning of part (B).
\end{enumerate}
We call the fragments obtained in part (A) \emph{old fragments}. Also, we call the leader of these fragments \emph{old leaders}. 
Once $r$ announces the beginning of part (B), no more synchronization is required as the computations in part (B) are inherently asynchronous.

\noindent
\textit{(B) Completing $F_{min}$:} Since the diameter of $F_{min}$ can be as large as $\Theta(n)$, in this part, nodes communicate through a different network. In fact, nodes communicate via the old fragments and through the spanning tree $T$. The key point in this part is to grow fragments in a way that each fragment  consists of a number of old fragments. \par
To find the remaining $O(\sqrt{n})$ minimum outgoing edges, and merge the old fragments, fragments rely on $r$ to do the computation. In particular, nodes first send the necessary information to their old leaders. Then, the old leaders, send their information (via $T$) to $r$. So, each message travels $O(\sqrt{n})$ hops. \par 
Then, $r$ computes the minimum outgoing edges and sends the instructions regarding the merges, back to the nodes. Once the minimum outgoing edges are computed and $r$ knows how the fragment IDs should be updated, information regarding the updates can be passed down to all nodes by reversing the direction of the convergecast messages.\par \textbf{Note:} Since each fragment in part (B) consists of of a number of old fragments, $r$ should pass the ID updates only to the old leaders. Then, the old leaders can pass the updates to all nodes in the old fragment.

\begin{lemma}
\label{lemmak3}
Computing $F_{min}$ requires $\tilde{O}(n)$ messages and $\tilde{O}(\sqrt{n})$ time.
\end{lemma}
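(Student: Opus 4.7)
The plan is to bound time and messages for Parts (A) and (B) separately, and observe that $O(\log n)$ Boruvka-style phases suffice for each part.

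For Part (A), each of the $O(\log n)$ phases consists of four sub-steps (begin-search, begin-merge, truncate, check-phase). Within a single fragment $T'$ that is currently permitted to search, the truncation rule forces $\text{diam}(T') = O(\sqrt{n})$. Computing the minimum outgoing edge in $T'$ uses a binary search on the weight range with $O(\log n)$ calls to FindAny per binary-search step, giving $O(|T'|\log^2 n)$ messages and $O(\sqrt{n}\log^2 n)$ time by Lemma \ref{findAnyLemma}. Summing over all fragments in the phase yields $O(n\log^2 n)$ messages. The four broadcast/convergecast synchronizations initiated by $r$ run on the global low-diameter spanning forest $T$, whose diameter is $O(\sqrt{n})$ by Observation \ref{o:height}, and so each costs $O(n)$ messages and $O(\sqrt{n})$ time. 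Multiplying by $O(\log n)$ phases gives $\tilde{O}(n)$ messages and $\tilde{O}(\sqrt{n})$ time for Part (A).

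For Part (B), after Part (A) there are at most $O(\sqrt{n})$ old fragments, each of diameter $O(\sqrt{n})$, and these are merged over $O(\log n)$ Boruvka-style phases. In each phase, each old fragment runs the binary-search-over-FindAny procedure on itself to obtain its minimum outgoing edge; because every old fragment has diameter $O(\sqrt{n})$, the per-fragment cost is $O(\sqrt{n}\log^2 n)$ time and the total across the $O(\sqrt{n})$ old fragments is $\tilde{O}(n)$ messages. The old leaders then upcast their candidate edges to $r$ along $T$ (at most $O(\sqrt{n})$ hops each), $r$ locally computes the set of merges, and the ID-update information is sent back down $T$ to the old leaders and then through each old fragment. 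Since $T$ has diameter $O(\sqrt{n})$ and only $O(\sqrt{n})$ distinct $O(\log n)$-bit messages need to be routed per synchronization step, standard pipelined convergecast/upcast finishes each step in $O(\sqrt{n})$ time with $O(n)$ messages. Over $O(\log n)$ phases this contributes $\tilde{O}(n)$ messages and $\tilde{O}(\sqrt{n})$ time.

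The main obstacle I anticipate is the asynchrony-to-synchrony gap: we must argue that each phase really finishes in time $\tilde{O}(\sqrt{n})$ and that $r$ can correctly detect the end of each sub-step without a global clock. This is handled by the convention that every non-root node waits to receive $r$'s ``begin'' message before acting on a sub-step and only then reports completion back up $T$, so the clean $O(\text{diam}(T))$ round-trip bound of Observation \ref{o:height} applies step-by-step. A secondary subtlety is congestion on $T$ during Part (B), where up to $O(\sqrt{n})$ concurrent $O(\log n)$-bit messages traverse a path of length $O(\sqrt{n})$; the standard upcast scheduling completes this in $O(\sqrt{n})$ time, preserving the claimed bounds.
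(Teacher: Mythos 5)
Your Part (A) analysis essentially matches the paper's: $O(\log n)$ Boruvka phases, each doing a binary search over FindAny on $O(\sqrt{n})$-diameter fragments for $\tilde{O}(n)$ messages and $\tilde{O}(\sqrt{n})$ time, plus a constant number of broadcast/convergecast synchronizations on $T$ per phase. Your remarks on the asynchrony-to-synchrony coordination via $r$'s ``begin'' messages, and on pipelining $O(\sqrt{n})$ messages over $O(\sqrt{n})$ hops, are also the right observations.

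Your Part (B), however, swaps in a different algorithm that is incorrect as stated. You have each \emph{old fragment} independently run binary-search-over-FindAny ``on itself to obtain its minimum outgoing edge'' and upcast the resulting \emph{candidate edges} to $r$. After the first merge of Part (B), a fragment is a union of several old fragments, so an edge leaving one old fragment may land inside a sibling old fragment of the same (larger) fragment; such an edge is internal to the fragment, not outgoing. FindAny's cancellation of internal edges happens only when the linear sketches of \emph{all} nodes in the current fragment are combined, which now spans multiple old fragments; it cannot be carried out inside a single old fragment unless each node already knows which of its neighbours share its (frequently changing) fragment ID, and maintaining that information across merges would itself be too expensive for the high-degree nodes. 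So the candidate an old fragment produces may be intra-fragment, and $r$ cannot recover the true minimum outgoing edge of the fragment simply by discarding those candidates. The paper avoids this: nodes convergecast their raw sketches to old leaders, old leaders forward the sketches (with their current fragment ID, pipelined along $T$) to $r$, and $r$ XORs the sketches of all old fragments sharing a fragment ID --- which correctly cancels the intra-fragment edges --- before performing the binary search centrally. This is what makes the $\tilde{O}(n)$ message and $\tilde{O}(\sqrt{n})$ time bounds you computed actually attach to a correct procedure. (A minor omission: you should also note, as the paper does, that summing over all connected components of $G'$ keeps the totals at $\tilde{O}(n)$ messages and $\tilde{O}(\sqrt{n})$ time.)
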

\begin{proof}
In part (A) there are $O(\log n)$ phases since the fragments are synchronized via $T$ and in each phase a constant fraction of all fragments can be merged. In each phase, fragments use a binary search and FindAny to find the minimum outgoing edges which takes $\tilde{O}(n)$ messages and $\tilde{O}(\sqrt{n})$ time. Merging and truncating in each phase needs a constant number of  broadcasts and convergecasts which take $O(n)$ messages and $O(\sqrt{n})$ time. Therefore, overall, part (A) takes  $\tilde{O}(n)$ messages and $\tilde{O}(\sqrt{n})$ time. \par 
In part (B), we only broadcast the beginning of part (B). Afterwards, all computations are performed exactly in the same way as \cite{ghaffari2018distributed}. Part (B) has also $O(\log n)$ phases. However, we do not need to synchronize phase by phase. In each phase, all nodes give the necessary information to the old leaders. This takes $\tilde{O}(n)$ messages and $\tilde{O}(\sqrt{n})$ time. Then, the old leaders give the necessary information to $r$. This also takes $\tilde{O}(n)$ messages and $\tilde{O}(\sqrt{n})$ time since there are $O(\sqrt{n})$ messages that have to be pipelined and travel $O(\sqrt{n})$ hops.

Over all connected components of $G'$, the algorithm takes $\tilde{O}(n)$ messages and $\tilde{O}(\sqrt{n})$ time, and the lemma follows.
\end{proof}

The correctness of part (2) follows from the correctness of the synchronous algorithm of \cite{ghaffari2018distributed}. The computations in each of the steps in part (A) and the whole part (B) can be simulated asynchronously since they do not depend on a global clock to be correct. Therefore, such simulation does not affect the correctness of the protocol.  \\[-.35in]

\subsection{Constructing the MST from $F_{min}$:} \label{s:mst}
We now have constructed the desired spanning subgraph $S_{min}$ on $G$. This subgraph consists of all edges that have at least one low-degree endpoint, and the edges of the minimum spanning forest $F_{min}$. Therefore, $S_{min}$ has $\tilde{O}(n^{3/2})$ edges. \par 
To construct the final MST we can use any $O(n)$ time asynchronous MST algorithm that requires no more than $O(m)$ messages on dense graphs. We use the algorithm of \cite{awerbuch1987optimal}. 

However, the only catch is that in an asynchronous network we cannot first complete parts (1) and (2) of the algorithm and then move to part (3). To solve this, we have all low-degree nodes run the MST protocol, right after they sent their initialization messages. High-degree and star nodes only begin to participate in the MST protocol once they have computed their corresponding tree in $F_{min}$. Computing $F_{min}$ requires $O(n)$ time (for part (1) and (2)). Therefore, the overall delays that high-degree and star nodes can cause is no more than  $O(n)$, which is the same as the time complexity of the MST protocol. We now prove the following lemma which implies that running an MST algorithm on $S_{min}$ computes the correct MST of $G$.

\begin{lemma}
\label{lemmak4}
All edges of the final MST must be in subgraph $S_{min}$.
\end{lemma}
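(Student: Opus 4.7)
The plan is to prove the statement by a case analysis on an arbitrary edge $e=(u,v)$ of $\mathrm{MST}(G)$, showing in each case that $e \in S_{min}$. The two cases are determined by whether $e$ has a low-degree endpoint or lies entirely inside $G'$.

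In the first case, at least one of $u,v$ is low-degree. Then $e$ belongs to $S_{min}$ immediately, because $S_{min}$ was defined to contain every edge of $G$ incident to a low-degree node. So nothing further is required.

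In the second case, both $u$ and $v$ are high-degree or star nodes, and therefore $e$ lies in the induced subgraph $G'$. Since $e$ itself connects $u$ and $v$ inside $G'$, they belong to the same connected component $C$ of $G'$. By construction $F_{min}$ is the minimum spanning forest of $G'$, so $F_{min}$ restricted to $C$ equals $\mathrm{MST}(C)$. It thus suffices to show $e \in \mathrm{MST}(C)$. I would argue this by the cycle property: suppose for contradiction that $e \notin \mathrm{MST}(C)$. Adding $e$ to $\mathrm{MST}(C)$ creates a unique cycle in $C$, and since edge weights are unique the heaviest edge on that cycle must be $e$ (else swapping would give a lighter spanning tree of $C$). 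This yields a path $P$ in $C$ from $u$ to $v$, all of whose edges are strictly lighter than $e$. But $C \subseteq G$, so $P \cup \{e\}$ is a cycle in $G$ with $e$ as its unique heaviest edge, contradicting the assumption $e \in \mathrm{MST}(G)$ by the cycle property applied to $G$.

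The argument requires no auxiliary machinery beyond uniqueness of weights (which the model section already guarantees) and the definition of $F_{min}$ as the minimum spanning forest of $G'$. The only subtlety worth making explicit is the observation that the restriction of $F_{min}$ to a connected component of $G'$ is exactly the MST of that component; once that is noted, the cycle-property contradiction closes the proof in a single line. There is no significant obstacle: this is essentially the standard fact that edges of the global MST that lie in an induced subgraph are preserved by the MST of that subgraph, specialized to $G'$.
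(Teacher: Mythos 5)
Your proof is correct and follows essentially the same route as the paper's: reduce to the case where both endpoints of $e$ are high-degree or star nodes, then apply a cycle-exchange argument in the relevant component of $G'$. One small improvement over the paper's version: the paper asserts that removing the maximum-weight edge of the cycle formed by $F_{min}\cup\{e\}$ yields a strictly lighter forest, which tacitly requires that $e$ itself is not the heaviest edge on that cycle; your proof makes this explicit by invoking the cycle property in $G$ to rule out $e$ being the heaviest edge.
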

\begin{proof}
Suppose on the contrary that there is some edge in the MST that is not in the set of edges of the subgraph $S_{min}$. This edge cannot have a low-degree endpoint because all such edges belong to the subgraph. Therefore, it must be an edge that is in the edges of $G'$ but does not appear in $F_{min}$. However, adding this edge to the corresponding tree in $F_{min}$ creates a cycle. By removing the maximum weight edge on that cycle we obtain a lighter tree which contradicts the fact that $F_{min}$ is the minimum spanning forest of $G'$.
\end{proof}

Correctness of this part follows from the correctness of the MST protocol that is used. The only thing we do in this part is that we let high-degree and star nodes participate in the protocol only when they finished computing their trees in $F_{min}$. This will only cause an additive delay on the MST protocol and will not affect the correctness. 

Theorem \ref{nongeneral} follows from Lemmas \ref{lemmak1}, \ref{lemmak2}, \ref{lemmak3}, and \ref{lemmak4}. \\[-.35in] 
\section{Trade-offs}
\label{sublinearsection}
The only parts in our algorithm that require linear time in $n$ are part (1) and part (3). Part (3) is linear due to the asynchronous MST algorithm that we use on $S_{min}$. However, in part (1), the time complexity comes directly from the number of star nodes. If we have $n^{1/2 - \epsilon}$ star nodes, the time complexity of this part is $\tilde{O}(n^{1 - 2 \epsilon})$. Since part (2) has a time complexity of $\tilde{O}(\sqrt{n})$, we only consider $\epsilon \in [0, 1/4]$. However, we have to increase the threshold for low-degree nodes to $n^{1/2 + \epsilon}$, as well. This will result in a message complexity of $\tilde{O}(n^{3/2 + \epsilon})$ which is interesting if the input graph has asymptotically more edges. For instance, by selecting $n^{1/4}$ ($\epsilon = 1/4$) star nodes, we have $\tilde{O}(n^{1/2})$ time and $\tilde{O}(n^{7/4})$ message complexity. \par
Notice that we presented our algorithm for $\epsilon = 0$; however, we optimized the $\log n$ factor to make sure that the time complexity remains $O(n)$.  Overall, the time and message complexity of part (1) and (2) of our algorithm will be $\tilde{O}(n^{1 - 2 \epsilon})$ and $\tilde{O}(n^{3/2 + \epsilon})$ respectively. And the subgraph $S_{min}$ will have $\tilde{O}(n^{3/2 + \epsilon})$ edges. Therefore, by applying 
 an asynchronous MST algorithm with time $T(n, m)$ and message complexity of $M(n, m)$, we get an algorithm with time $O(n^{1 - 2\epsilon} + T(n, n^{3/2 + \epsilon}))$ and message complexity of $\tilde{O}(n^{3/2 + \epsilon} + M(n, n^{3/2+\epsilon}))$. This proves Theorem \ref{theogen}.
\section{Conclusions}
\label{conclusions}
The most important question is whether it is possible to find an MST algorithm in the asynchronous model, that requires time sublinear in $n$ and messages sublinear in $m$. We know from the previous section that this can be done if there exists an asynchronous MST algorithm that takes sublinear time if the diameter of the network is low, and has $\tilde{O}(m)$ message complexity. To the best of our knowledge no such algorithm exists. Similarly, is it possible to have an asynchronous breadth-first-search (BFS) algorithm that requires $\tilde{O}(D)$ time and $\tilde{O}(m)$ messages? Having such algorithm would help in synchronizing the network and would allow us to achieve sublinear time by following the same strategy as part (2) of our algorithm on the whole graph.




\newpage

\bibliography{mybib}

\appendix

\section{MST algorithm of \cite{ghaffari2018distributed}} In this section, we describe the algorithm of \cite{ghaffari2018distributed} for computing the MST of a connected component if a spanning tree of low diameter already exists in that component. We call this low diameter spanning tree $T$ and assume that $r$ is the root of $T$. In \cite{ghaffari2018distributed}, this algorithm is performed on the whole graph when a spanning tree of diameter $\tilde{O}(\sqrt{n} + D)$ was computed on $G$ using a breadth-first-search algorithm. The two parts of the algorithm are as follows: \par 

\textit{Part A - Computing MST fragments with low diameter:} An important point is that in this part merging is done differently from the standard Boruvka algorithm. In each phase, before fragments start merging, each fragment flips a fair coin. Then, if a fragment $A$ sends a merge request to a fragment $B$, the merge is only accepted if $A$ has a Tail coin, and $B$ has a Head coin. \par
Moreover, not all fragments look for minimum outgoing edges. In fact, fragments whose diameter is less than $\sqrt{n}$ are \emph{active}, and the other fragments are \emph{inactive}. Only active fragments can look for the minimum outgoing edge. However, \emph{inactive} fragments may still accept the merge request of other fragments, given that the active fragment has a Tail coin and the inactive fragment has a Head coin. \par 
There are $O(\log n)$ phases. In each phase, active fragments look for the minimum outgoing edge and send merge requests along those edges. If accepted, the fragments are updated. The objective is to keep the height of all fragments (active or inactive) in range $[\sqrt{n}, 5 \sqrt{n}]$. After the merging is done it is possible that the height of some fragments exceeds $5 \sqrt{n}$. However, because of the coin flip that the fragments use, the merges always happen at the center of a Head fragment. Therefore, after merging the height will be no more than $\sqrt{n} + 5 \sqrt{n} + 1 = O(\sqrt{n})$. \par 
To truncate the trees, fragment leaders do a broadcast to let all nodes know their distance from the leader. Whenever distance of a node $v$ becomes a multiple of $\sqrt{n} + 1$ the edge from $v$ to its parent is discarded. However, it is possible now that the subtree rooted at $v$ has a height of less than $\sqrt{n}$. To detect this, leaves do a convergecast so that each node on higher levels knows the height of the subtree. If $v$ is a node whose subtree has a height  less than $\sqrt{n}$ it will  undo the truncate operation and puts the edge back in the tree.\par 
Part (A) terminates when there are no active fragments left. Fragments computed in this part are called \emph{old fragments}.

\textit{Part B - MST growth beyond low-diameter fragments:} The diameter of the final MST could be as large as $n$. To make sure that this part is performed in sublinear time in $n$, MST fragments rely on $r$ to compute their minimum outgoing edges for them, and help them merge. The objective is to find the $\tilde{O}(\sqrt{n})$ minimum outgoing edges and link the old fragments. The key point here is that each MST fragment is composed of a number of old fragments. This allows, $r$ to send the instructions to only the old leaders. Then, the old leaders pass them down to the nodes in the old fragment. \par 
In FindAny, all a node has to do is to compute linear sketches from its incident edges that it wants to consider in the search. These sketches have a size of $\tilde{O}(1)$. To find the minimum outgoing edges, the fragments do as follows. Nodes compute their sketches and give them to their old leader via a convergecast. Then, the old leaders give all of the sketches along with their current fragment IDs to $r$. Then, $r$ can compute the minimum outgoing edges by performing the binary search. Once $r$ computes the minimum outgoing edge, it broadcasts the corresponding minimum outgoing edges back to the old leaders, and the old leaders pass it down to all nodes in their old fragment. Similarly, $r$ tells the nodes how they should update their fragment ID.

\end{document}